\NeedsTeXFormat{LaTeX2e}
\documentclass[11pt]{article}

\usepackage[a4paper,margin=1.1in]{geometry}
\usepackage{authblk}
\usepackage{amsfonts}
\usepackage{amsthm}
\usepackage{amsmath}
\usepackage{graphicx}
\usepackage{comment}
\usepackage[usenames,dvipsnames]{xcolor}
\usepackage{enumerate}
\usepackage{extarrows}
\usepackage{hyperref}
\usepackage[round, authoryear]{natbib}
\usepackage{stackrel}
\usepackage{centernot}
\usepackage{caption}
\usepackage{subcaption}
\usepackage{fancyhdr}

\usepackage{bm}

\hyphenation{sub-ordinator}
\pagestyle{fancy}
\lhead{}
\rhead{\today}


\newtheorem{theorem}{Theorem}

\newtheorem{proposition}{Proposition}
\newtheorem{definition}{Definition}

\newtheorem{remark}{Remark}

\numberwithin{equation}{section}
\numberwithin{theorem}{section}
\numberwithin{corollary}{section}
\numberwithin{proposition}{section}
\numberwithin{lemma}{section}
\numberwithin{definition}{section}
\numberwithin{remark}{section}

\newtheorem{example0}{\sc Example}[subsection]

\hyphenation{sub-ordinator}

\makeatletter
\def\th@newremark{\th@remark\thm@headfont{\bfseries}}
\makeatletter

\setcounter{page}{1}
\def\nexto{\kern -0.54em}


\def\boxit#1{\vbox{\hrule\hbox{\vrule\kern6pt 
          \vbox{\kern6pt#1\kern6pt}\kern6pt\vrule}\hrule}}







\newcommand{\MMM}{{\cal M}}

\newcommand{\calP}{\mathcal{P}}

\newcommand{\BB}{\mathbb{B}}

\newcommand{\MM}{\mathbb{M}}

\newcommand{\R}{\Bbb{R}}

\newcommand{\N}{\Bbb{N}}


\newcommand{\wt}{\widetilde}

\newcommand{\dto}{\downarrow}


\newcommand{\be}{\begin{equation}}\newcommand{\ee}{\end{equation}}
\newcommand{\bea}{\begin{eqnarray}}\newcommand{\eea}{\end{eqnarray}}
\newcommand{\bean}{\begin{eqnarray*}}\newcommand{\eean}{\end{eqnarray*}}
\newcommand{\ben}{\begin{equation*}}\newcommand{\een}{\end{equation*}}
\newcommand{\ba}{\begin{aligned}}\newcommand{\ea}{\end{aligned}}

\newcommand{\BN}{\mathcal{BN}}

\newcommand{\bfJ}{{\bf J}}

\newcommand{\bfX}{{\bf X}}

\newcommand{\bfn}{{\bf n}}
\newcommand{\bfm}{{\bf m}}
\newcommand{\bfM}{{\bf M}}
\newcommand{\bfN}{{\bf N}}
\newcommand{\bfY}{{\bf Y}}
\newcommand{\bfe}{{\bf e}}

\newcommand{\PD}{\textbf{\rm PD}}
\newcommand{\PK}{\textbf{\rm PK}}
\newcommand{\PKr}{\textbf{\rm PK}^{(r)}}

\newcommand{\rmd}{{\rm d}}

\newcommand{\BBr}{\mathbb{B}^{(r)}}
\newcommand{\MMr}{\mathbb{M}^{(r)}}

\newcommand{\PP}{\mathbb{P}}
\newcommand{\EE}{\mathbb{E}}
\newcommand{\Tr}{T^{(r)}}

\begin{document}

\title{A Gibbs Sampling Scheme for  a Generalised Poisson-Kingman Class}


\author[1]{Robert C. Griffiths}
 \affil{Department of Mathematics, Monash University, Clayton Victoria 3168, Australia.
 Email: bob.griffiths@monash.edu}


\author[2]{Ross A. Maller}
\affil{Research School of Finance,  Actuarial Studies \&  Statistics,
Australian National University, Canberra, Australia.
Email: Ross.Maller@anu.edu.au
 }

\author[3]{Soudabeh Shemehsavar
\thanks{Email: Shemehsavar@khayam.ut.ac.ir}}
\affil[3]{School of Mathematics, Murdoch University, Perth, Western Australia;
School of Mathematics, Statistics \& Computer Sciences,
 University of Tehran. \newline
 Email: shemehsavar@ut.ac.ir (corresponding author)}

\maketitle
\vspace{-1cm}
\begin{abstract}\vspace{-0.03cm}
A  Bayesian
 nonparametric  method of James, Lijoi \& Prunster (2009)
 used to  predict future values of observations
 from  normalized random
 measures with independent increments
 is modified 
 to a class of models 
  based on negative binomial processes
   for which the increments are not independent, but are independent conditional on an underlying gamma variable.
   Like in James et al., the new algorithm is formulated in terms of two variables, one a function of the past observations, and the other an updating by means of a new observation.
We outline an application of the procedure to population genetics,
 for the  construction of realisations of genealogical  trees and coalescents from samples of alleles.

\noindent {\small {\bf Keywords:}}\\
\noindent{\small {\bf 2010 Mathematics Subject Classification:}  Primary  60G51, 60G52, 60G55.}

\end{abstract}

\section{Introduction}\label{Intro}\
 \cite{JamesLijoiPrunster2009}
 provided a comprehensive Bayesian
 nonparametric analysis of random probabilities,
 considered as  generalizations of the Dirichlet process,
  obtained by normalizing random
 measures with independent increments (NRMIs).
  They formulated  the posterior distribution of an NRMI as
 a mixture with  respect to the distribution of a specific latent variable, showed that  the posterior distribution 
 is again an NRMI, and derived from the ensuing   representation 
 a rule for predicting future values of observations, given those already observed. 
 This takes the form of  a mixture of two variables, one a function of the past observations, and the other an updating by means of the new observation.
 From this they  developed  a Gibbs sampling algorithm generalising
the Blackwell-MacQueen urn scheme
which provides a mechanism for simulating sample observations  
 from the NRMI, 
 producing explicit and tractable expressions suitable for
 practical implementation.

 This procedure is closely related to 
 algorithms due to
 \cite{GTav1994b}
  for computing probability distributions of sample configurations of alleles in population genetics.
 \cite{GTav1994b} derive a recursion formula satisfied by the sampling probabilities and use it to construct a Markov chain 
 framework for simulating the likelihood of the data, given a set of parameter values. This approach corresponds to an importance sampling  procedure (\cite{FKYB1999}, \cite{SD2000}).
The recursive nature of the method results from looking
back in time in the coalescent tree to the most recent 
event in the past (either a coalescence or a mutation) to see what type of tree 
configuration 
there must have been 
at that time to produce the current  tree configuration.

   In the present paper we modify the
  \cite{JamesLijoiPrunster2009}  method 
  (also referring to \cite{LancelotJames2002, LancelotJames2005})
  to handle a wide class of models 
  based on negative binomial processes for which the increments are not independent, but are independent conditional on an underlying gamma variable.
  The formulae we derive  allow  adaptation of the  
  \cite{GTav1994b} methods
 for the  construction of realisations of  trees and coalescents from observations, either simulated or actual, on the more general processes.

\subsection{The setup}\label{TS} \
Let  $(\xi_i)_{i\in\N}$,
$\N=\{1,2,\ldots\}$, 
 be a sequence of i.i.d. random variables on
$\R$ with a specified initial ``base'' distribution $G_0$.
Let  $(P_i)_{i\in\N}$ with $P_1 > P_2> \cdots  > 0$ be a random weight sequence on the infinite simplex $\nabla_\infty=
\{z_i\in [0, 1]: z_1 \ge z_2 \ge \cdots \ge 0,\, 
\sum_{i=1}^\infty z_i = 1\}$, 
and $G$ a random discrete distribution generated from $(P_i)_{i\in\N}$ and  $(\xi_i)_{i\in\N}$.
With  $\delta_x$ denoting a point mass at $x\in\R$, 
$G$  can be written as
\be\label{ssm0}
G =\sum_{i\ge 1}P_i \delta_{\xi_i}.
\ee
The data sample $\bfX = \{X_1, \ldots, X_n\}$
is obtained by drawing $n$ values from  $G$. 
Thus, conditionally given $G$, the $X_i$  are i.i.d. with distribution $G$. 
 In  the context of population genetics, the $X_i$ may represent the allele types of a particular locus on the chromosome. 
 
Due to the assumed discreteness of $G$, the observed data $\bfX$ in general contains ties, and can be reduced to a vector of unique values, $\bfY = \{Y_1, \ldots, Y_{K_n}\}$, and their corresponding multiplicities, 
$\bfN_{[n]}=(N_{[n]}^1, N_{[n]}^2, \ldots, N_{[n]}^{Kn})$. 
Here $K_n$ is a random variable increasing in $n$ representing the number of unique values -- clusters, or blocks -- in $\bfX$. 
Given $\bfX$, hence $K_n$, $K_n=k$, say, 
$ \bfN_{[n]}$ takes values among $k$-vectors of positive integers $\bfn = \{n_1, \ldots, n_k\}$
with $\sum_{i=1}^kn_i=n$.
 We consider the homogenous case where the $X_i$ are independent of the $P_i$.
 

\subsection{Distributions on the infinite simplex:  
$\PK(\rho)$ and $\PK^{(r)}(\rho)$}
In the Poisson case, the $P_i$ in \eqref{ssm0} are  generated from 
a Poisson point process $\BB$ whose intensity measure  has density $\rho(s)$ satisfying 
\be\label{cond0}
\lim_{s \dto 0}\rho(s) = \infty, \quad 
\rho(s) <\infty \text{ for all } s> 0, \quad 
\text{ and } 
\int_{0}^1 s\rho(s) \rmd s<\infty. 
\ee
 We can write $\BB$ in the form $\BB = \sum_{i \ge 1} \delta_{S_i}$ with points
 $S_1> S_2>\cdots$.
The  $P_i$ for the process $\BB$  are defined as the ratios
 of the ranked points of  $\BB$ normalised by their sum:
\be\label{PK}
 (P_i)_{i\in\N} =
 \Big(\frac{S_i}{\sum_{\ell\in \N}S_\ell}\Big) _{i\in\N}.
\ee
This kind of construction was first given by \cite{kingman1975}.
In his case the $S_i$ were the ranked jumps $\Delta S_1^{(1)} >\Delta S_1^{(2)}> \cdots$
of a subordinator $(S_t)_{t\ge 0}$ with L\'evy density $\rho(s)$ taken at time $1$.

The distributions of the $(P_i)_{i\in\N}$ thus constructed are said to be in  $\PK(\rho)$,  the {\it Poisson-Kingman class}.
These processes have been intensively studied for certain specifications of $\rho$. 
  We are interested in a generalised version which 
  we denote as  $\PKr(\rho)$.
To construct it, we modify  \eqref{PK} as follows.

\begin{definition}\label{d1}[Definition of $\BN(r, \rho)$]
Take a $\rho$ satisfying \eqref{cond0} and a parameter $r>0$.
Following \cite{Gregoire1984}, introduce  a negative binomial process, $\BBr$, defined in terms of  its Laplace functional:  
\be\label{lap_r}
\EE\big(e^{\BBr(f)}\big) =\Big(1+\int_{\R_+} (1-e^{-f(x)}) \rho(x)\rmd x\Big)^{-r},
\ee
 for any bounded nonnegative measurable function $f$ on $\R_+$. Such processes are said to be in $\BN(r, \rho)$.
 \end{definition}
 
  \noindent
 Basic properties of a negative binomial process are that numbers of points in non-overlapping Borel sets are negative binomially  (but  not independently) distributed.

 \eqref{lap_r} shows that we can regard a process in $\BN(r, \rho)$ to be  a Poisson point process with random intensity measure $\Gamma_r\rho$, where $\Gamma_r := \Gamma_{r,1}$, a Gamma random variable\footnote{We use the notation
  $\Gamma_{\beta, \gamma}$ for a Gamma random variable with parameter $(\beta, \gamma)$ having density 
$(\gamma x)e^{-\gamma x} (\gamma x)^{\beta -1}/\Gamma(\beta)$, $\beta, \gamma > 0$,  where $\Gamma(\cdot)$ is the gamma function.
} 
  with parameter $(r,1)$. 
Using  such a $\BBr$, we construct $\PKr(\rho)$ as follows.
  
\begin{definition}\label{d2}[Definition of $\PKr(\rho)$]
Write $\BBr = \sum_{i\in\N} J_i$ in terms of 
its  ranked points $J_1 > J_2 > J_3> \cdots$.
Analogous to \eqref{PK}, construct a random vector from the  ranked points in $\BBr$  normalised by their sum. The resulting
random sequence on the infinite simplex is said to have   distribution  $\PKr(\rho)$; thus,
\be\label{PKr}
 \Big(\frac{J_i}{\sum_{\ell\in \N} {J_\ell}}\Big) _{i\in\N} \sim \PKr(\rho), \quad r> 0.
\ee
\end{definition}

 L\'evy densities of particular interest are
\be\label{rho00}
\rho_\alpha (x) = \frac{\alpha}{\Gamma(1-\alpha)} x^{-\alpha-1} , \quad 
\rho_\theta (x) = \theta x^{-1} e^{-x}, \quad 
\rho_{G}(x) = \frac{\alpha}{\Gamma(1-\alpha)}x^{-\alpha-1} e^{-x}, x> 0,
\ee
where $\theta > 0$ and $ 0< \alpha <1$.
We can identify important members of the $\PK$ class in the $\PK^{(r)}$ framework. 
For example, the two-parameter Poisson-Dirichlet process 
$\PD(\alpha, \theta)$ (\cite{PY1997}) is obtained as
$\PD(\alpha, \theta)= \PK^{(\theta/\alpha)}(\rho_G)$.
Further, 
$\PD(\alpha, 0) = \PK^{(r)}(\rho_\alpha)$ for any $r>0$, 
and
$\PD(0, \theta)$ can be derived as a limiting case:
$\PD(0, \theta) = \lim_{\alpha \to 0}\PK^{(\theta/\alpha)}(\rho_G) = \lim_{r\to \infty} \PK^{(r)}(\rho_\theta/r)$
(\cite{PY1997},  \cite{IpsenMaller2017}).

The choice 
 $\rho_R(x) = \alpha x^{-\alpha-1}{\bf 1}_{0<x\le 1}$
 in \eqref{cond0} gives the class  $\PD_\alpha^{(r)}$
studied in \cite{IpsenMallerShemehsavar2020a,
IpsenMallerShemehsavar2021} and  \cite{MallerShemehsavar2023b}
and related to the negative binomial process in \eqref{lap_r}.
\cite{CZ2023} and \cite{LabadiZarepour2014} give further properties and  applications of  $\PD_\alpha^{(r)}$.
The distribution of  $\PD_\alpha^{(r)}$ converges to that of $\PD(\alpha, 0)$ as $r\dto 0$
(\cite{MallerShemehsavar2023b}).

\section{Posterior Distributions for  $\PKr(\rho)$}\label{pd}
\cite{LancelotJames2002, LancelotJames2005}  developed a framework to compute the likelihood and posterior distributions of partitioned data derived from an underlying Poisson point process. In particular, 
\cite{JamesLijoiPrunster2009} computed the posterior analysis for $G$ when $(P_i, i\ge 1)$ are in the 
Poisson-Kingman class. We use the same framework to derive the posterior distribution of $G$ when the weight sequence is distributed as $\PKr(\rho)$.
Like James et al. we make use of certain auxiliary random variables, denoted $(V_i)_{i\in\N}$  in our notation, to be specified later.

 By analogy with their approach, we define the following negative binomial point process augmented by the inclusion of an independent sequence  $\xi_i$:
\be\label{NB0}
\MMr(\cdot, \cdot) := \sum_{i} \delta_{(J_i, \xi_i)} 
\sim \BN(r, \rho, G_0),
\ee
where $(J_i)_{i\in\N}$ are the ranked  points of a $\BN(r, \rho)$ process $\BBr$,
and $(\xi_i)_{i\in\N}$ are i.i.d. with a
distribution $G_0$,
 independent of $(J_i)_{i\in\N}$.
Note that $\MMr$ is no longer a Poisson random measure,
 but, conditional on a $\Gamma_r$ random variable, $\MMr$ is a Poisson point process with intensity measure $\Gamma_r \rho(\cdot)G_0(\cdot)$. 
We can relate $\MMr$ to the species sampling model $G$ in \eqref{ssm0} when $(P_i)_{i\in\N}$ is distributed as $\PK^{(r)}(\rho)$ by setting
\be\label{NBG}
G(\cdot) =  \frac{\mu(\cdot)}{\Tr}, \ \text{where} \
\mu(\cdot) = \int_{s>0} s \, \MMr(\rmd s, \cdot) \ \text{and} \
\Tr = \mu(\Omega) = \sum_{i\in\N} J_i.
\ee

%

We aim to use the methods of 
\cite{JamesLijoiPrunster2009} to derive the posterior distribution of the species sampling model $G$ 
in \eqref{NBG} given observations $\bfX$. In the analysis, the $n$ observations in $\bfX$ appear in terms of their $k$ unique values $\bfY = \{Y_1, \ldots, Y_k\}$ and their respective multiplicities $ \bfN_{[n]}$.

Our  first task is to compute the posterior distribution of the underlying augmented measure $\MMr$ in \eqref{NB0},
 given $\bfX$. 
In the following analysis, we fix $r > 0$ and base distribution $G_0$.  Define
\be\label{def1}
\psi(v) =1+ \int_{x>0} (1-e^{-vx})\rho(x)\rmd x
\ \text{and}  \
\pi_n(v) = (-1)^{n-1} \psi^{(n)}(v)
=\int_{x>0} x^n \rho(x) e^{-vx}\rmd x, \ v>0.
\ee
Conditional on $\bfX$, hence with $\bfY$, $k$ and $\bfn = \{n_1, \ldots, n_k\}$ given, define an auxiliary variable $V_n:=V_n(r)$ with conditional 
 density proportional to 
\be\label{vn}
	g_r(v, \bfn):= \frac{r^{[k]}}{(\psi(v))^{r+k}} \cdot \,\frac{v^{n-1}}{\Gamma(n)}
	 \prod_{i=1}^{k}\pi_{n_i}(v),\ v>0,
\ee 
where $r^{[k]}=\Gamma(r+k)/\Gamma(r)$, $r>0$. 
Our main results stem from:

\begin{theorem}\label{postN}
Conditional on $V_n$ and $\bfX$, we have the following decomposition:
\be\label{postN0}
\MM_\bfX^{(r, V_n)} = \wt \MM^{(r+k, V_n)} + \sum_{i=1}^{k} \delta_{J_{i}^{V_n}, Y_i},
\ee where,  for each $v>0$, 
\be\label{wtM}
\wt \MM^{(r+k, v)}\sim 
\BN\Big(r+k, \frac{e^{-vs}}{\psi(v)}\rho(s), G_0\Big),\ 
{\rm  independent\ of}\ \MMr,
\ee
 and the $J_{i}^{v}$, $1\le i\le k$,  are independent 
  with  densities
	\be\label{dis:J}
	P\big(   J_{i}^{v} \in \rmd s\big)
	=
	\frac{s^{n_i}e^{-vs}\rho(s)\rmd s}	{\pi_{n_i}(v)}, \ s>0.
	\ee
\end{theorem}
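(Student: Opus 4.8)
The plan is to disintegrate the joint law of the pair $(\bfX,\MMr)$ by a Poisson partition calculus argument in the spirit of \cite{LancelotJames2002,LancelotJames2005,JamesLijoiPrunster2009}, inserting one extra conditioning step to absorb the Gamma mixing that distinguishes a $\BN(r,\rho)$ process from an ordinary Poisson random measure. As is standard I take $G_0$ diffuse, so that the $K_n=k$ distinct sample values $\bfY=\{Y_1,\dots,Y_k\}$ correspond almost surely to $k$ distinct atoms of $\MMr$, the $i$th carrying mass $J_{(i)}:=\mu(\{Y_i\})$. Conditionally on $\MMr$ the $X_j$ are i.i.d.\ from $G=\mu/\Tr$ of \eqref{NBG}, so, grouping by ties, the contribution of the data to the joint law is proportional to $\Tr^{-n}\prod_{i=1}^k J_{(i)}^{n_i}$ (together with $\prod_i G_0(\rmd Y_i)$ for the locations). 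The first move is to linearise the normaliser via $\Tr^{-n}=\Gamma(n)^{-1}\int_0^\infty v^{n-1}e^{-v\Tr}\,\rmd v$: this is exactly the device that introduces the auxiliary variable $V_n$, and since $e^{-v\Tr}=\prod_\ell e^{-vJ_\ell}$ the exponential factors across the atoms.

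Next I would remove the mixing. By Definition~\ref{d1}, conditionally on the latent variable $\Gamma_r=\gamma$ the measure $\MMr$ is a Poisson point process with intensity $\gamma\rho(s)\,\rmd s\,G_0(\rmd x)$, so with $\gamma$ fixed the problem is purely Poissonian. Applying the $k$-fold Mecke (Palm) formula to extract the atoms located at $Y_1,\dots,Y_k$, each extracted atom contributes a factor $\int_0^\infty s^{n_i}e^{-vs}\gamma\rho(s)\,\rmd s=\gamma\,\pi_{n_i}(v)$, with $\pi_{n_i}$ as in \eqref{def1}, and leaves behind a residual Poisson process at rate $\gamma\rho\,G_0$ reweighted by $e^{-v\Tr}$. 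That reweighting is an exponential change of measure for a Poisson process: the tilted residual is again a Poisson point process, now with intensity $\gamma\,e^{-vs}\rho(s)\,\rmd s\,G_0(\rmd x)$, and it produces the normalising constant $\exp\!\big(-\gamma\int_0^\infty(1-e^{-vs})\rho(s)\,\rmd s\big)=e^{-\gamma(\psi(v)-1)}$, with $\psi$ as in \eqref{def1}. Condition \eqref{cond0}, in particular $\int_0^1 s\rho(s)\,\rmd s<\infty$, ensures $\psi(v)<\infty$ and the finiteness of all the integrals, legitimising the Fubini interchanges.

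It then remains to integrate out $\gamma$ and read off the pieces. Outside the residual law the $\gamma$-dependence is $\gamma^k\,e^{-\gamma(\psi(v)-1)}$ times the $\mathrm{Gamma}(r,1)$ prior density $\gamma^{r-1}e^{-\gamma}/\Gamma(r)$, so integrating over $\gamma\in(0,\infty)$ gives $\Gamma(r)^{-1}\int_0^\infty\gamma^{r+k-1}e^{-\gamma\psi(v)}\,\rmd\gamma=r^{[k]}/\psi(v)^{r+k}$. Multiplying by $v^{n-1}/\Gamma(n)$ and by the extracted-atom factors $\prod_i\pi_{n_i}(v)$ (the $\gamma^k$ having just been absorbed) reproduces $g_r(v,\bfn)$ of \eqref{vn} as the unnormalised conditional density of $V_n$ given $\bfX$. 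The same computation shows that, conditionally on $V_n=v$ and $\bfX$, the residual measure is a mixed Poisson process with intensity $\gamma\,e^{-vs}\rho(s)$ and mixing law $\mathrm{Gamma}(r+k,\psi(v))$ --- equivalently $\wt\MM^{(r+k,v)}\sim\BN\!\big(r+k,\,e^{-vs}\rho(s)/\psi(v),\,G_0\big)$ of \eqref{wtM} --- independent of the extracted atoms, whose conditional densities, the $\gamma$ cancelling against its appearance in the normaliser $\gamma\pi_{n_i}(v)$, are $s^{n_i}e^{-vs}\rho(s)/\pi_{n_i}(v)$ as in \eqref{dis:J}. Assembling these factors gives the decomposition \eqref{postN0}.

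The step I expect to be the main obstacle is the Mecke/change-of-measure computation: one must apply the multivariate Palm formula so that each tied block is counted exactly once (this is what forces $G_0$ to be diffuse), keep the jump-size integrals $\pi_{n_i}(v)$ cleanly separated from the base-measure masses $G_0(\rmd Y_i)$, and justify the exponential tilting of the Poisson process jointly with the interchange of the $\gamma$- and $v$-integrals, so that the Gamma mixing recombines into the $\BN(r+k,\cdot)$ law rather than being left as a free integral.
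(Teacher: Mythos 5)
Your proposal is correct and follows essentially the same route as the paper: condition on the latent Gamma variable to reduce to a Poisson process, introduce $V_n$ via the identity $\Tr^{-n}=\Gamma(n)^{-1}\int_0^\infty v^{n-1}e^{-v\Tr}\,\rmd v$, extract the $k$ atoms by Palm/Mecke calculus with exponential tilting of the residual, and integrate out the Gamma variable to obtain the $\mathrm{Gamma}(r+k,\psi(v))$ mixing, which identifies the residual as $\BN\big(r+k,\,e^{-vs}\rho(s)/\psi(v),\,G_0\big)$ and yields the atom densities \eqref{dis:J}. The only difference is presentational: the paper imports the joint formula from Eq.~(23) of \cite{JamesLijoiPrunster2009} and the disintegration identity of \cite{LancelotJames2005}, whereas you re-derive that input directly, which amounts to the same computation.
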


\begin{remark}
{\rm
The process $\wt \MM$ in \eqref{wtM} is constructed from a negative binomial point process whereas the analogous process in \cite{JamesLijoiPrunster2009}
is from a Poisson point process.
A significant distinction is the appearance of the denominator  term $\psi(v)$ in \eqref{wtM} which of course changes the dynamics of the process appreciably.

The introduction of the auxiliary random variable $V_n$ conveniently decomposes the posterior structure of $\MMr$ into 
an infinite measure part and a finite dimensional part. Conditional on 
$V_n$, the posterior distribution of a negative binomial process is 
again a negative binomial point process $\BN(r+k, \rho)$, noting in 
particular the change in parameter from $r$ to $r+k$, where $k$ is 
the current number of distinct values in $\bfX$.
}
\end{remark}

Using the posterior distribution for the underlying negative binomial process in Theorem  \ref{postN}, it is immediate to give the posterior distributions of $\mu$ and $G$ in \eqref{NBG}.
The following modifies Theorems 1 and 2
 of  \cite{JamesLijoiPrunster2009}:

\begin{theorem}\label{postU}
(i)\
Conditional on $\bfX$ and $V_n$, the posterior distribution of $\mu$ is 
\be\label{pmu}
\mu_\bfX^{V_n} = \mu^{V_n} + \sum_{i=1}^{k} J_i^{V_n} \delta_{Y_i}
\ee
where $\mu^{V_n}(\rmd x) = \int_{s>0} s \MM^{(r+k, V_n)}(\rmd s, \rmd x)$ and $(J_i^{V_n}, {Y_i})_{i=1, \ldots, k}$ are defined as in
Theorem \ref{postN}.

 (ii)\ Conditional on $\bfX$ and $V_n$, the posterior distribution of $G$ is, 
 with $T^{V_n} = \mu^{V_n}(\Omega)$,
\be\label{pG}
G_\bfX^{V_n}(\rmd x) =  \frac{\mu^{V_n}(\rmd x)}{T^{V_n} + \sum_{i=1}^{k} J_i^{V_n}} +  \sum_{i=1}^{k} \frac{J_i^{V_n}}{T^{V_n} + \sum_{i=1}^{k} J_i^{V_n}} \delta_{Y_i} (\rmd x).
\ee
\end{theorem}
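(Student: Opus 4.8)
The plan is to obtain Theorem~\ref{postU} as a corollary of Theorem~\ref{postN}, by transporting the pathwise decomposition \eqref{postN0} of the augmented posterior measure through the two deterministic maps that define $\mu$ and $G$ in \eqref{NBG}, exactly as \cite{JamesLijoiPrunster2009} (using the disintegration framework of \cite{LancelotJames2002, LancelotJames2005}) derive their Theorems~1 and~2 from the posterior law of the underlying random measure. Throughout one works on the probability space carrying the joint law of $(\MMr, \bfX, V_n)$ under which \eqref{vn} specifies the conditional density of $V_n$ given $\bfX$, so that Theorem~\ref{postN} is a genuine disintegration of the conditional law $\mathcal{L}(\MMr \mid \bfX, V_n)$; both maps below act pathwise on realisations of the random measure and hence commute with conditioning on $(\bfX, V_n)$.

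For part~(i), apply to both sides of \eqref{postN0} the size-biasing functional $\nu(\cdot,\cdot) \mapsto \int_{s>0} s\, \nu(\rmd s, \cdot)$, which by definition \eqref{NBG} sends $\MMr$ to $\mu$. By linearity of the integral the continuous summand produces $\mu^{V_n}(\rmd x) := \int_{s>0} s\, \wt\MM^{(r+k, V_n)}(\rmd s, \rmd x)$, while $\int_{s>0} s\, \delta_{(a,y)}(\rmd s, \rmd x) = a\, \delta_y(\rmd x)$ turns the finite sum into $\sum_{i=1}^k J_i^{V_n} \delta_{Y_i}(\rmd x)$; together these give \eqref{pmu}. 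The law of $\mu^{V_n}$ is inherited from \eqref{wtM}: it is the measure $\sum_i \wt J_i\, \delta_{\wt \xi_i}$ built from the points $\wt J_i$ of a $\BN(r+k, e^{-vs}\rho(s)/\psi(v), G_0)$ process. By \eqref{def1}, $\int_{s>0} s\, e^{-vs}\rho(s)\,\rmd s = \pi_1(v) < \infty$, so $T^{V_n} := \mu^{V_n}(\Omega) < \infty$ a.s.

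For part~(ii), \eqref{NBG} exhibits $G$ as the measurable functional $\nu \mapsto \nu/\nu(\Omega)$ of $\mu$; hence $\mathcal{L}(G \mid \bfX, V_n)$ is the image (pushforward) of $\mathcal{L}(\mu \mid \bfX, V_n)$ under this map. The total mass of the random measure in \eqref{pmu} is $\mu^{V_n}(\Omega) + \sum_{i=1}^k J_i^{V_n} = T^{V_n} + \sum_{i=1}^k J_i^{V_n}$, and dividing \eqref{pmu} by this quantity yields \eqref{pG}.

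All the substantive content sits in Theorem~\ref{postN}; what remains is bookkeeping, but two points deserve care and constitute the only real checks. First, \eqref{postN0} must be used as an identity of random measures, not merely as a statement about the marginal laws of its two summands, so that the size-biasing and the subsequent normalisation reproduce the atoms and the denominator in \eqref{pmu}--\eqref{pG} with nothing lost or double-counted. Second, normalisation is legitimate because the denominator $T^{V_n} + \sum_{i=1}^k J_i^{V_n}$ is a.s.\ finite---by \eqref{def1} the tilted L\'evy density integrates $s$ at infinity, $\pi_1(v) < \infty$, which is precisely what conditioning on the auxiliary variable $V_n$ buys us even when $\int_1^\infty \rho(s)\,\rmd s = \infty$---and a.s.\ strictly positive, since $k \ge 1$ and each $J_i^{V_n}$ has a density on $(0,\infty)$ by \eqref{dis:J}. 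I anticipate no genuine obstacle beyond verifying these two points.
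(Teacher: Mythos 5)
Your proposal is correct and is exactly the paper's route: the paper offers no separate argument for Theorem \ref{postU}, stating only that it is ``immediate'' from Theorem \ref{postN} by applying the maps $\nu\mapsto\int_{s>0}s\,\nu(\rmd s,\cdot)$ and $\nu\mapsto\nu/\nu(\Omega)$ from \eqref{NBG} to the pathwise decomposition \eqref{postN0}, which is what you do. Your added checks (finiteness of $T^{V_n}+\sum_i J_i^{V_n}$ via $\pi_1(v)<\infty$ and its strict positivity) are sensible bookkeeping that the paper leaves tacit.
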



We obtain the marginal distribution of 
 $\bfN_{[n]}=(N_{[n]}^1, N_{[n]}^2, \ldots, N_{[n]}^{K_n})$   
 by integrating out $V_n$ in \eqref{vn}.
 This produces the \emph{exchangeable partition probability function (EPPF)}:
 	\be\label{eppf}
 	p(\bfn) = p(n_1, n_2, \ldots, n_k) = \int_v \Big[\prod_{i=1}^{k}\pi_{n_i}(v)\Big]
	 \times \frac{v^{n-1}}{\Gamma(n)}r^{[k]}\times (\psi(v))^{-(r+k)} \rmd v = \int g_r(v, \bfn)\rmd v.
 	\ee

\section{Prediction Rules}\label{PR}
Here we state our main result, a formula for the predictive distribution of $X_{n+1}$ given the data $\bfX$. 
As we show in Section \ref{gibbs}, this can be used to construct an urn scheme for sampling analogous to the Chinese restaurant process commonly used to visualise these kinds of  procedure.
The following formula \eqref{pred0} is exactly analogous to formula (10) in Proposition  2 of \cite{JamesLijoiPrunster2009},
but again note
 the appearance of the denominator term $\psi(v)$ in \eqref{prednew} as a major distinction.
For  proof of Theorem \ref{predict} see the Appendix.

\begin{theorem}\label{predict}
Let $G$ be a species sampling model constructed from a negative binomial process with parameter $r$, intensity measure $\rho$ and base distribution $G_0$ as in \eqref{NBG}. Then the predictive distribution for $X_{n+1}$ given $\bfX$ has the same law as 
\be\label{pred0}
\PP(X_{n+1} \in \rmd x| \bfX) 
= \omega_0^{(n)} G_0(\rmd x) + \frac 1 n \sum_{i=1}^{k} \omega_i^{(n)} \delta_{Y_i} (\rmd x)
\ee
where 
\be\label{prednew}
\omega_0^{(n)} = 
\frac{r+k}{n} \int_0^\infty v\frac{\pi_1(v)}{\psi(v)} \,g_r(v, \bfn)\rmd v,
\ee and for $i = 1, \ldots, k$,
\be\label{predold}
\omega_i^{(n)} =  \int_0^\infty v  \frac{\pi_{n_i+1}(v)}{\pi_{n_i}(v)}\,g_r(v, \bfn)\rmd v.
\ee
\end{theorem}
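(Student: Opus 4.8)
The plan is to read \eqref{pred0} directly off the exchangeable partition probability function \eqref{eppf}, rather than re-deriving a posterior. Conditionally on $G$ the observations $X_i$ are i.i.d.\ from the a.s.\ discrete law $G$, so $(X_i)_{i\ge1}$ is an exchangeable species-sampling sequence directed by $G$; assuming, as the form of \eqref{pred0} presumes, that $G_0$ is non-atomic, the species-sampling prediction rule (Pitman's extension of the Blackwell--MacQueen urn) gives
\be
\PP(X_{n+1}\in\rmd x\mid\bfX)
=\frac{p(n_1,\ldots,n_k,1)}{p(\bfn)}\,G_0(\rmd x)
+\sum_{i=1}^{k}\frac{p(n_1,\ldots,n_i+1,\ldots,n_k)}{p(\bfn)}\,\delta_{Y_i}(\rmd x),
\ee
where $p(\cdot)$ is the symmetric EPPF of \eqref{eppf} and $n_i=N_{[n]}^i$; conditioning on $\bfX$ is the same as conditioning on the partition of $\{1,\ldots,n\}$ together with its i.i.d.\ $G_0$-labels $Y_1,\ldots,Y_k$. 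It then remains only to evaluate the two families of EPPF ratios.

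Evaluating $p$ at $(n_1,\ldots,n_k,1)$, i.e.\ with a fresh singleton block (block count $k+1$, total sample size $n+1$), formula \eqref{eppf} gives $\int_0^\infty\bigl(r^{[k+1]}/\psi(v)^{r+k+1}\bigr)\bigl(v^{n}/\Gamma(n+1)\bigr)\pi_1(v)\prod_{i=1}^{k}\pi_{n_i}(v)\,\rmd v$; substituting $r^{[k+1]}=(r+k)r^{[k]}$ and $\Gamma(n+1)=n\Gamma(n)$ and peeling off one factor of $v$ through $v^{n}/\Gamma(n+1)=v\cdot v^{n-1}/(n\Gamma(n))$ turns the integrand into $\tfrac{r+k}{n}\,v\,\tfrac{\pi_1(v)}{\psi(v)}\,g_r(v,\bfn)$. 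Evaluating $p$ at $(n_1,\ldots,n_i+1,\ldots,n_k)$ (block count unchanged, total $n+1$) and running the same bookkeeping gives integrand $\tfrac1n\,v\,\tfrac{\pi_{n_i+1}(v)}{\pi_{n_i}(v)}\,g_r(v,\bfn)$. Since the conditional density of $V_n$ given $\bfX$ is $g_r(v,\bfn)/p(\bfn)$ with $p(\bfn)=\int_0^\infty g_r(v,\bfn)\,\rmd v$ by \eqref{eppf}, dividing the last two integrals by $p(\bfn)$ identifies them with $\omega_0^{(n)}$ and $\tfrac1n\omega_i^{(n)}$ as in \eqref{prednew}--\eqref{predold}, and the display above collapses to \eqref{pred0}. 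The extra $\psi(v)^{-1}$ and the factor $r+k$ in $\omega_0^{(n)}$ are precisely what $\psi(v)^{-(r+k+1)}$ and $r^{[k+1]}/r^{[k]}$ contribute when a block is opened, and the recurring factor $v$ is what is left over in passing from $v^{n-1}/\Gamma(n)$ to $v^{n}/\Gamma(n+1)$.

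I expect the only non-mechanical step to be the displayed prediction rule itself: granted that $(X_i)$ is exchangeable with EPPF $p$ and that $G_0$ is diffuse, reading off the chances that $X_{n+1}$ opens a fresh block or lands in block $i$ is immediate, and the resulting weights sum to one automatically, by the EPPF addition rule $p(\bfn)=p(n_1,\ldots,n_k,1)+\sum_{i=1}^{k}p(n_1,\ldots,n_i+1,\ldots,n_k)$. A route closer to \cite{JamesLijoiPrunster2009} would instead write $\PP(X_{n+1}\in\rmd x\mid\bfX)=\EE[\,G_\bfX(\rmd x)\mid\bfX\,]$, insert the posterior law of $G$ from Theorem~\ref{postU}(ii), condition on $V_n$, and evaluate $\EE\bigl[(\mu^{V_n}(\rmd x)+\sum_i J_i^{V_n}\delta_{Y_i})/(T^{V_n}+\sum_i J_i^{V_n})\bigr]$ by combining the identity $1/t=\int_0^\infty e^{-\lambda t}\,\rmd\lambda$ with a Palm (Mecke-type) formula for $\wt\MM^{(r+k,V_n)}$; there the real obstacle is the Palm formula for the negative binomial random measure, which one obtains by conditioning on its underlying $\Gamma_{r+k}$ randomisation and invoking the Poisson Palm formula, after which the two auxiliary integrations (over $\lambda$ and over $V_n$) must be reconciled. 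I would take the first route, since it isolates this single prerequisite and leaves only the elementary bookkeeping shown above.
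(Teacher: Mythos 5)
Your proposal is correct, but it takes a genuinely different route from the paper. The paper proves Theorem \ref{predict} in the style of \cite{JamesLijoiPrunster2009}: it writes $\PP(X_{n+1}\in\rmd x\mid\bfX)=\int_v \EE(G^{v}_{\bfX}(\rmd x))\,g_r(v,\bfn)\rmd v$, inserts the posterior representation of Theorem \ref{postU}, and evaluates the two normalised terms with the identity $1/t=\int_0^\infty e^{-\lambda t}\rmd\lambda$, the Palm/disintegration formula of James (2005) applied to the gamma-randomised Poisson process, an explicit Laplace-functional computation, and finally the change of variable $z=\lambda+v$. You instead read the rule off the EPPF \eqref{eppf} (which is indeed available before Theorem \ref{predict}, being the $v$-integral of the joint density \eqref{vnXjoint} obtained in the proof of Theorem \ref{postN}) via Pitman's prediction rule for exchangeable species-sampling sequences with diffuse base measure, after which the weight identities are exactly the two lines of bookkeeping you give, and the addition rule you invoke is precisely what the paper verifies separately as Proposition \ref{3.1}, Eq. \eqref{nune}. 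What each buys: your argument is far shorter and makes the provenance of the factors $(r+k)$ and $\psi(v)^{-1}$ transparent, at the cost of importing Pitman's theorem as a black box and of needing $G_0$ non-atomic stated explicitly; the paper's posterior route is self-contained, and its intermediate objects (the $V_n$-conditional structure) are what feed the conditional prediction rule \eqref{condPred2} used in the Gibbs sampler of Section \ref{gibbs}, which your marginal EPPF argument does not by itself deliver. One caveat, which is not a gap relative to the paper: strictly your EPPF ratios are $\omega_0^{(n)}/p(\bfn)$ and $\omega_i^{(n)}/(n\,p(\bfn))$ with $p(\bfn)=\int_0^\infty g_r(v,\bfn)\rmd v$, so identifying them with \eqref{prednew}--\eqref{predold} drops the normalising constant; but the paper does exactly the same (it treats $g_r(v,\bfn)$, defined in \eqref{vn} only up to proportionality, as if it were the conditional density of $V_n$, and Proposition \ref{3.1} records that the weights add to the EPPF rather than to one), so your statement matches the theorem as written.
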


\begin{remark}{\rm 
 	(i)\ The probability of forming a new cluster given $\bfX$ is $\omega_0^{(n)}$, which can be written as 
		\be\label{omega0b}
		\omega_0^{(n)} = \frac{r}{n} \int_0^\infty  \frac{ v\pi_1(v) (r+1)^{[k]}}{(\psi(v))^{r+1+k}} \frac{v^{n-1}}{\Gamma(n)} \prod_{i=1}^k \pi_{n_i}(v)\rmd v
		= \frac{r}{n} \int_0^\infty  v\pi_1(v) g_{r+1}(v, \bfn)\rmd v.
		\ee
 
%

\noindent 
(ii)\
The {\it allele frequency spectrum} (AFS) is the $n$-vector
$\bfM_n=(M_{1n}, \ldots, M_{nn})$, 
where $M_{jn}= \sum_{i=1}^{k} 
{\bf 1}_{\{N_{[n]}^i=j\}}$, $1\le j\le n$.
We have $\sum_{j=1}^n j M_{jn}=n$
and  $\sum_{j=1}^nM_{jn}=k$.
 $\bfM_n$ takes values among $n$-vectors of nonnegative integers
$\bfm= (m_1,\ldots,m_n)$ satisfying  
\be\label{4.4a0}
A_{nk}:=\Big\{\bfm= (m_1,\ldots,m_n): m_j\ge 0, \, \sum_{j=1}^njm_j=n,\, \sum_{j=1}^nm_j=k\Big\}.   
\ee

\noindent 
(iii)\ (Posterior Distribution of $K_n$)\
	Let $(D_i, i=1, \ldots, n)$ be independent Bernoulli($\omega_0^{(i)}$) random variables such that $D_i = 1$ when the $i$th individual sampled takes a different value from the previous $i$ labels. In the Chinese Restaurant process analogy, the $i$th customer sits at a different table with probability $\omega_0^{(i)}$. The posterior distribution of $K_n$ is $K_n^\bfX = 1+\sum_{i=1}^n D_i$.

From \eqref{vn} we can write $g_r(v, \bfn)$ alternatively in terms of the $m_j$ as:
\be\label{vn5}
	g_r(v, \bfm):= \frac{r^{[k]}}{(\psi(v))^{r+k}} \cdot \,\frac{v^{n-1}}{\Gamma(n)}
	 \prod_{j=1}^{n}(\pi_{j}(v))^{m_j},\ v>0.
\ee 	
}
\end{remark}

The next proposition as a check verifies that the probabilities in \eqref{pred0} and \eqref{prednew} add to the EPPF, and a   renormalisation of the EPPF gives the corresponding  partition distribution.

\begin{proposition}\label{3.1}\
Using the notation in Theorem \ref{predict},  we have
\be\label{nune}
 \omega_0^{(n)}+ \frac 1 n \sum_{i=1}^{k} \omega_i^{(n)} = \int_0^\infty g_r(v,\bfn)\rmd v;
 \ee
 and 
 \be\label{nunf}
 \sum_{k=1}^n \sum_{\bfm\in A_{nk}}
\frac{n!}{\prod_{j=1}^{n}j!^{m_j} m_j!}
 \int_0^\infty g_r(v,\bfm)\rmd v
 =1.
 \ee
\end{proposition}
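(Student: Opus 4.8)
The plan is to prove the two identities separately, in each case by reducing the claim to an identity that holds pointwise in $v$ under the integral sign, and then integrating. For \eqref{nune}, I would start from the explicit formulas \eqref{prednew} and \eqref{predold} for $\omega_0^{(n)}$ and $\omega_i^{(n)}$ and factor $g_r(v,\bfn)$ out of every term, so that the left-hand side becomes
\[
\int_0^\infty \Big(\frac{(r+k)v\,\pi_1(v)}{n\,\psi(v)} + \frac1n\sum_{i=1}^k \frac{v\,\pi_{n_i+1}(v)}{\pi_{n_i}(v)}\Big)g_r(v,\bfn)\,\rmd v .
\]
Thus it suffices to show the bracketed factor equals $1$ for every $v>0$. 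This is where the differential structure of $\psi$ and the $\pi_n$ comes in: from \eqref{def1}, $\pi_{n+1}(v) = -\pi_n'(v)$ and $\pi_1(v) = \psi'(v)$, so the bracket is a logarithmic-derivative identity. The cleanest route is to differentiate $g_r(v,\bfn)$ in $v$ directly: since $g_r(v,\bfn)$ is a product of $v^{n-1}$, $\psi(v)^{-(r+k)}$ and $\prod_i \pi_{n_i}(v)$, its logarithmic derivative is $(n-1)/v - (r+k)\psi'(v)/\psi(v) - \sum_i \pi_{n_i+1}(v)/\pi_{n_i}(v)$. Hence $\frac{\rmd}{\rmd v}\big(v\, g_r(v,\bfn)\big) = n\,g_r(v,\bfn) - v\,g_r(v,\bfn)\big[(r+k)\pi_1(v)/\psi(v) + \sum_i \pi_{n_i+1}(v)/\pi_{n_i}(v)\big]$. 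Integrating this over $(0,\infty)$, the left side telescopes to $\big[v\,g_r(v,\bfn)\big]_0^\infty = 0$ provided the boundary terms vanish; rearranging then yields exactly \eqref{nune}. So the analytic content of the first identity is precisely that $v\,g_r(v,\bfn)\to 0$ at both endpoints, which must be checked from \eqref{cond0} (the behaviour $\rho(s)\sim$ integrable near $0$ and $\int_0^1 s\rho(s)\,\rmd s<\infty$ control $\pi_{n_i}$ and $\psi$ near $v=0$, while $e^{-vx}$ gives decay at $v=\infty$).

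For \eqref{nunf}, the plan is to recognise the combinatorial coefficient $n!/\prod_j (j!^{m_j} m_j!)$ as the number of set partitions of $\{1,\ldots,n\}$ whose block-size profile is $\bfm$ — equivalently, the number of ordered compositions $(n_1,\ldots,n_k)$ of $n$ realising $\bfm$, divided by the internal symmetry. Concretely I would first rewrite the double sum $\sum_{k}\sum_{\bfm\in A_{nk}} \frac{n!}{\prod_j j!^{m_j}m_j!}\, g_r(v,\bfm)$, using \eqref{vn5} versus \eqref{vn}, as $\sum_{k}\frac{1}{k!}\sum_{(n_1,\ldots,n_k)} \binom{n}{n_1,\ldots,n_k}\, g_r(v,\bfn)$ where the inner sum is over all $k$-tuples of positive integers summing to $n$ (each unordered profile $\bfm$ being hit $k!/\prod_j m_j!$ times, and $\binom{n}{n_1,\ldots,n_k} = n!/\prod_i n_i!$ while $\prod_i n_i! = \prod_j j!^{m_j}$ along any tuple with profile $\bfm$). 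Then I would interchange sum and integral and show the resulting quantity is $1$. The most transparent way is to appeal to the fact, already implicit in the paper, that $p(\bfn)=\int_0^\infty g_r(v,\bfn)\,\rmd v$ is a genuine EPPF: by exchangeability the total mass $\sum_{k}\sum_{\bfn}\big(\text{\# compositions}\big)\,p(\bfn)/k!$ is the probability that the random partition $\bfN_{[n]}$ lands somewhere, hence $1$. Alternatively, and more self-containedly, I would verify \eqref{nunf} by induction on $n$ using \eqref{nune}: the predictive decomposition says that summing $p(\bfn)$ over all configurations of the first $n{+}1$ observations equals summing, over configurations of the first $n$, the quantity $\big(\omega_0^{(n)} + \frac1n\sum_i \omega_i^{(n)}\big)p$-type mass, which by \eqref{nune} reproduces $\sum p(\bfn)$ at level $n$; the base case $n=1$ is $\int_0^\infty g_r(v,(1))\,\rmd v = r\int_0^\infty \pi_1(v)\psi(v)^{-(r+1)}\,\rmd v = \big[-\psi(v)^{-r}\big]_0^\infty = 1$.

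The main obstacle is not the algebra but justifying the two analytic steps: (a) that the boundary term $v\,g_r(v,\bfn)$ vanishes as $v\downarrow 0$ and $v\to\infty$, which is what makes the telescoping argument for \eqref{nune} legitimate, and (b) the interchange of the (infinite, over $n$ this is finite, but the manipulation still needs care) summation with integration in \eqref{nunf}, together with making sure the counting of compositions versus profiles is bookkept correctly. For (a) I would extract the needed estimates from \eqref{cond0}: near $v=0$, $\pi_{n_i}(v)\le \int_0^\infty x^{n_i}\rho(x)\,\rmd x$ which is finite for $n_i\ge 1$ by the tail behaviour of a Lévy density (or at worst $\pi_1$ is handled by $\int_0^1 s\rho(s)\,\rmd s<\infty$ plus $\rho$ finite away from $0$), so $v^{n-1}\prod_i\pi_{n_i}(v)\to 0$ like $v^{n-1}$ with $n\ge k\ge 1$, and $\psi(v)\to 1$; near $v=\infty$, dominated convergence in the defining integrals gives $\pi_{n_i}(v)\to 0$ and $\psi(v)\to\infty$. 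Everything else is routine substitution of the definitions \eqref{def1}, \eqref{vn}, \eqref{vn5}, \eqref{prednew}, \eqref{predold}.
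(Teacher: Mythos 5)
Your argument for \eqref{nune} is essentially the paper's own: both of you recognise the bracketed weight $\tfrac{(r+k)\pi_1(v)}{\psi(v)}+\sum_i\tfrac{\pi_{n_i+1}(v)}{\pi_{n_i}(v)}$ as the derivative of $\log\big((\psi(v))^{r+k}/\prod_i\pi_{n_i}(v)\big)$ and remove it by a single integration by parts -- the paper writes $g_r(v,\bfn)\propto v^{n-1}e^{-f(v)}$ and integrates $\int v^n f'(v)e^{-f(v)}\rmd v$ by parts, which is exactly your telescoping of $\frac{\rmd}{\rmd v}\big(v\,g_r(v,\bfn)\big)$; you are in fact more careful than the paper, which never checks that the boundary term $v\,g_r(v,\bfn)$ vanishes at $0$ and $\infty$. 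For \eqref{nunf}, however, you take a genuinely different route. The paper identifies $\sum_k\sum_{\bfm\in A_{nk}}\frac{n!}{\prod_j j!^{m_j}m_j!}g_r(v,\bfm)$, via Fa\`a di Bruno's formula with $f(x)=x^{-r}$ and $\phi=\psi$, as $\frac{(-1)^n v^{n-1}}{\Gamma(n)}\frac{\rmd^n}{\rmd v^n}(\psi(v))^{-r}$ and then obtains $1$ after $n-1$ further integrations by parts. You propose instead either to invoke that $p(\bfn)=\int_0^\infty g_r(v,\bfn)\rmd v$ is the marginal law of the partition (logically admissible given Section 2, but against the spirit of a proposition offered ``as a check''), or an induction on $n$ driven by \eqref{nune}: since $n!/\prod_j j!^{m_j}m_j!$ counts set partitions of $\{1,\dots,n\}$ with profile $\bfm$, and since by direct inspection of \eqref{prednew}, \eqref{predold} and \eqref{eppf} one has $\omega_i^{(n)}(\bfn)/n=p(\bfn+\bfe_i)$ and $\omega_0^{(n)}(\bfn)=p(n_1,\dots,n_k,1)$ (using $\Gamma(n+1)=n\Gamma(n)$ and $r^{[k+1]}=(r+k)r^{[k]}$), identity \eqref{nune} is precisely the EPPF consistency $p(\bfn)=\sum_i p(\bfn+\bfe_i)+p(\bfn,1)$, so the total mass over set partitions is constant in $n$ and reduces to the base case $\int_0^\infty r\pi_1(v)(\psi(v))^{-(r+1)}\rmd v=1$. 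This induction is a valid and arguably more elementary alternative, avoiding Fa\`a di Bruno entirely; to make it complete you should state the two displayed identifications explicitly (you only gesture at them), and note that your base case and boundary-term claims rest on the same implicit facts the paper uses without comment, namely $\psi(0)=1$, $\psi(v)\to\infty$ as $v\to\infty$, and the vanishing of $v^{n}\prod_i\pi_{n_i}(v)/(\psi(v))^{r+k}$ at both endpoints (your sketch near $v=\infty$ needs the $\psi$-denominator, not just $\pi_{n_i}(v)\to0$, to beat the $v^n$ factor).
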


%

\section{A Gibbs sampling scheme for 
$\PD_\alpha^{(r)}$}\label{gibbs}
The following  Gibbs sampling scheme generalises the Blackwell-MacQueen urn scheme in \cite{JamesLijoiPrunster2009}, Section 3.3.
Choose a value of $\bfX = (X_1, \ldots, X_n)$ as follows.
\begin{enumerate}[\rm (i)]
	\item sample $V_0$ from density \eqref{vn} with $n=k=1$,
 which is  proportional to $\frac{r\pi_1(v)}{\psi(v)}$;
	\item sample $X_1$ from $\PP(\rmd X_1 | V_0)$, 
 which is  proportional to
	$\frac{r\pi_1(V_0)}{\psi(V_0)}G_0(\rmd x)$;
	\item\label{repeat} 
	for $\ell\ge 1$,
			\begin{enumerate}[\rm (a)]
			\item sample $V_\ell$ from $\PP(\rmd V_\ell|X_1, \ldots,X_{\ell})$, which is proportional to \eqref{vn} with $k$ replaced by $\ell$;
			\item sample $X_{\ell+1}$ from $\PP(\rmd X_{\ell+1} | X_{1}, \ldots, X_{\ell}, V_\ell)$ 
using  the conditional prediction rule
$	\PP(X_{n+1} \in \rmd x| \bfX, V_n=v)$  proportional to
	\begin{align}\label{condPred2}
	\frac{(r+\ell)\pi_1(v)}{\psi(v)} G_0(\rmd x) 
	+ \sum_{i = 1}^\ell \frac{\pi_{n_i+1}(v)}{\pi_{n_i}(v)} \delta_{Y_i}(\rmd x);
	\end{align}
			\end{enumerate}
	\item repeat \eqref{repeat} until $X_n$ is reached.
	\end{enumerate}
	
	For analysis, collect the $X_\ell$ into  vector
	$\bfX = (X_1, \ldots, X_n)$, and  reduce $\bfX$ to a vector of unique values, $\bfY = \{Y_1, \ldots, Y_{K_n}\}$, and their corresponding multiplicities, $\bfn = \{n_1, \ldots, n_{K_n}\}$, with $\sum_{i=1}^{K_n} n_i=n$, as in Subsection \ref{TS}.
	 	Note that the EPPF  formula \eqref{eppf}
 	recovers Eq. (5.6) in
 	\cite{IpsenMallerShemehsavar2020a}
 	when $\rho(x)\rmd x=\alpha x^{-\alpha-1}\rmd x{\bf 1}_{0<x\le 1}$,
 	 so the algorithm in Subsection \ref{gibbs} does indeed produce samples from $\PD_\alpha^{(r)}$.
 	 
%

	\section{Constructing Trees and  Coalescents from a $\PD_\alpha^{(r)}$ Sample}\label{tree}
	A genealogical tree is a diagram that depicts the ancestral history and relationships between genes in a sample from within a species.
The tree's branches represent  lines of descent and the 
 branches'  tips are labelled with the  
 allele types. The pattern of branching reflects relationships between alleles; 
they are more closely related when there are fewer nodes on the path from one to another through the tree.
A  fundamental objective of population genetics analysis
is to construct genealogical trees from  DNA sequence data 
with the aim of developing probability models to
describe the ancestral processes involved.
	
Good estimates of tree configurations 
are crucial for quantifying statistical uncertainty and estimating 
parameters such as effective population size, mutation rate, allele age, etc. See \cite{Hudson1991}) for a general overview of the applications.
\cite{GTav1994b} describe a sequential importance sampling algorithm for calculating the probability of a sample of size $n$ under mutation and genetic drift in an unstructured population. 
	\cite{DEIG2004} developed an improved method of constructing sequential
 importance sampling proposal distributions on coalescent histories. 
In this section we outline an extension of the \cite{GTav1994b} algorithm  to the $\PK^{(r)}$ class,
deriving a backwards recursion, depending on ancestral histories, for the sampling distribution.
A time scale is introduced for the backwards processes.
 Such kinds of recursion are important in sequential importance sampling and duality with diffusion processes as illustrated for 
 the one and two parameter Poisson-Dirichlet processes
 in  \cite{GRSZ2023} and \cite{P2009}.
 
Consider a population consisting of a number of  genes, each of which has an
 infinite number of nucleotide sites, such that at each site there are two or more
 possible nucleotides, the original one and mutant ones. 
 In the infinitely-many-sites model of DNA sequences the ``type" of a gene is  given by a sequence of ls and 0s indicating at which sites a mutation has
 occurred in the gene or one of its ancestors. The reproductive mechanism  assumes in effect that each of the genes in a generation under consideration selects a parent
 gene at random (with replacement) from the previous generation. 
 With a designated  probability the offspring gene is
 of the same type as the parent gene; with the complementary probability a mutation occurs, in which case a 0 changes to a 1 at a site where no mutation
 has occurred in any gene. Which site is chosen for the mutation does not matter,  as sites are unordered.
 At an allele level sequences are recorded as being distinct if their DNA types are different, but the fine details of site mutations are not taken into account.
 See \cite{EG1987} for more details.

In a reverse procedure, for a given  set $D$ (descendents)  of input sequences in the generation under consideration, a genealogy for $D$ can be constructed backward in time using the evolutionary events of mutation and coalescence. 
Types are thought of as mutations and removing a singleton is like removing a mutation.
An  ancestral configuration can be regarded as the multiset of all sequences present at a particular point in time in a possible genealogy for $D$. 
It may be thought  of as a tree (reading forward in time) or a coalescent (reading backward in time).

The random vector $\bfN_{[n]}$ of Section \ref{pd},
 which takes values among partitions of $[n]$ is now interpreted as counting the  numbers of distinct alleles in a sample of size $n$ 
arranged in their order of appearance in the sample. 
Suppose we currently have an observation  $\bfn=(n_1,\ldots, n_k)$ on  $\bfN_{[n]}$, where $n_i$ is the number of allele type $i$ among the $n$ alleles, of which there are $k$ distinct types. 
Let $\bfe_i =(0,0,\ldots,0, 1_i, 0, \ldots,  0,  0)$ 
be a unit  $n$-vector with a 1 in the $i$-th position.
Recalling  the notation in Theorem \ref{predict}, we argue that to obtain the configuration $\bfn $ there was a birth from $\bfn-\bfe_i$ with probability 
\be\label{b1}
\frac{n_i}{n}\frac{1}{n-1}\omega_i^{(n-1)}(\bfn  -\bfe_i ),
\ee
when $n_i>1$, 
or, for a singleton
($n_i=1$), with probability 
\be\label{b2}
\frac{1}{n} \omega_0^{(n-1)}(\bfn -\bfe_i ).
\ee
All the terms when $n_i=1$ are identical.  The factor of $n_i/n$ appears  because each individual in the current sample (of size $n$) has an equal probability of being the new individual from the immediate ancestral sample (of size $n-1$).

A full investigation of the coalescent structure and biological relevance of (\ref{b1}) and (\ref{b2}) for particular models is beyond the scope of this paper and yet to be determined.
But we can give some preliminary observations as follows.
From \eqref{b1} and \eqref{b2} we get  a fundamental equation
for the EPPF, $\text{p}(\bm{n})$, in  \eqref{eppf}:
\begin{align}\label{fundamental:00}
\text{p}(\bfn )
&=\sum_{n_i>1}\frac{n_i}{n}\cdot \frac{1}{n-1}\omega_i^{(n-1)}(\bfn  -\bfe_i )
+\sum_{\{i:n_i=1\}}\frac{1}{n}\omega_0^{(n-1)}(\bfn  -\bfe_i ).
\end{align}
 The second term counts the number of singletons in the sample,  multiplied by 
 the probability of a singleton, $\omega_0^{(n)}$.
  All the terms in the second sum are identical.
  
Based on  (\ref{fundamental:00}) we define a continuous time Markov process 
\begin{align}\label{fundamental:01}
\frac{\rmd}{\rmd t}\text{p}(\bfn ,t)
&=-\varphi(\bfn )\text{p}(\bfn ,t)\nonumber\\
&+\varphi(\bfn )\sum_{n_i>1}\frac{n_i}{n(n-1)}\cdot \frac{\omega_i^{(n-1)}(\bfn  -\bfe_i )}{\text{p}(\bfn -\bfe_i )}\cdot \text{p}(\bfn -\bfe_i ,t)\nonumber\\
&+
\varphi(\bfn )\sum_{\{i:n_i=1\}}\frac{1}{n}\frac{\omega_0^{(n-1)}(\bfn  -\bfe_i )}{\text{p}(\bfn -\bfe_i )}\cdot
 \text{p}(\bfn -\bfe_i ,t),
\end{align}
where $\varphi(\bfn )$ is a total rate for the configuration  $\bfn $.
Rescale $\text{p}(\bfn ,t)$ as
\begin{align*}
H(\bfn ,t) &= \frac{\text{p}(\bfn ,t)}{\text{p}(\bfn )}, \ t\ge 0,
\end{align*}
so that
\begin{equation}
\frac{\rmd}{\rmd t}H(\bfn ,t) = -\varphi(\bfn )H(\bfn ,t) 
 + \varphi(\bfn )\sum_{n_i>1}\frac{n_i}{n}H(\bfn -\bfe_i ,t) 
 + \varphi(\bfn )\sum_{\{i:n_i=1\}}\frac{1}{n}H(\bfn -\bfe_i ,t).
 \label{Heq:0}
\end{equation}
The rates in (\ref{Heq:0}) are $\varphi(\bfn )n_i/n$ when both $n_i>1$ or $n_i=1$, and of course we have
\[
\sum_{i=1}^k\varphi(\bfn )\frac{n_i}{n} = \varphi(\bfn ).
\]
$H(\bfn ,t)$ is a function of a process $(\bfn (t), t\geq 0)$, which begins at $\bfn (0)=\bfn $ and decreases back in time by coalescing within a group where $n_i > 1$ or removing a singleton (``mutation'') when $n_i  = 1$. The rates $\bfn  \to \bfn -\bfe_i $ are $\varphi(\bm{n})n_i/n$. The choice of $H(\bfn ,0)$ is important.
In the one and two parameter Poisson-Dirichlet processes the construction of $H(\bfn ,t)$ leads to a coalescent  process dual to the diffusions.

It is easy to convert p$(\bfn )$ into a partition distribution by taking
\[
{\cal P}(\bfn )
 = {n\choose \bfn }\frac{1}{m_1!m_2!\cdots m_n!}\text{p}(\bfn )
\]
where $m_\ell$ is the number of $n_j$'s equal to $\ell$, but we restrict ourselves mainly to the function $\text{p}$  for simplicity.

The ratios after calculation for $n_i>1$ and $n_i=1$ are
\begin{align}\label{fundamental:02a}
\frac{n_i}{n}\cdot\frac{1}{n-1}\frac{\omega_i^{(n-1)}(\bfn  -\bfe_i )}{\text{p}(\bfn -\bfe_i )}
&= \frac{n_i}{n(n-1)}\cdot
\frac{\int_0^\infty \frac{v^{n-1}}{(\psi(v))^{r+k}}\prod_{j=1}^k\pi_{n_j}(v)dv}
{\int_0^\infty \frac{v^{n-2}}{(\psi(v))^{r+k}}\prod_{j=1}^k\pi_{n_j-\delta_{ij}}(v)dv}
\end{align}
and
\begin{align}\label{fundamental:02b}
\frac{1}{n}\frac{\omega_0^{(n-1)}(\bfn  -\bfe_i )}{\text{p}(\bfn -\bfe_i )}
&=\frac{r+k-1}{n(n-1)}\cdot\frac{
\int_0^\infty\frac{v^{n-1}}{(\psi(v))^{r+k}}
\prod_{j=1}^k\pi_{n_j}(v)dv
}
{
\int_0^\infty\frac{v^{n-2}}{(\psi(v))^{r+k-1}}
\prod_{j=1,j\ne i}^k\pi_{n_j}(v)dv
}.
\end{align}
Note that the numerator in (\ref{fundamental:02a}) 
is the same in both the ratios, so can be factored out of the 
sums in 
\eqref{fundamental:01}
and  connected with $\varphi(\bfn )$.
The two ratios in (\ref{fundamental:02a}) and
(\ref{fundamental:02b}) 
 can be written as
\begin{align*}
&\frac{n_i}{n}\cdot
\frac{
\int_0^\infty g(v;\bfn )dv
}
{
\int_0^\infty g(v;\bfn -\bfe_j)dv
}
\quad {\rm and} \quad
\frac{1}{n}\cdot
\frac{
\int_0^\infty g(v;\bfn )dv
}
{
\int_0^\infty g(v;\bfn -\bfe_j)dv
}.
\end{align*}
The two ratios are the same, however the meaning is different because a singleton is removed in the second ratio.
For comparison, in the two-parameter Pitman-Yor model the two terms in  (\ref{fundamental:02a}) and
(\ref{fundamental:02b})  are
\[
\frac{n_i(n_i-\alpha-1)}{n(\theta+n-1)}\text{~and~} \frac{\theta+(k-1)\alpha}{n(\theta+n-1)}
\]
 respectively. These have an important meaning in duality with the two-parameter diffusion.

\medskip\noindent{\large \bf Acknowledgement}\ 
We are very grateful to Yuguang Ipsen for her help in the initial stages of this investigation.


%

\section{Appendix: Proofs}
It is well known (e.g. \cite{LancelotJames2005})
that the conditional distribution of a Poisson point measure $\BB$ with intensity $\rho$ given that a point $S$ is observed equals $\BB_S = \wt \BB + \delta_S$,
 where $\wt \BB$ is  again a Poisson point process with the same distribution as, but independent of, $\BB$. 
We denote the law of $\BB$ with intensity $\rho$ by $\calP(\rmd \BB | \rho)$ and the Palm measure of $\BB$,  conditional on observing $S$, as $\calP_S (\rmd \BB | \rho)$, which is the law of $\BB_S$. 

The Palm measure for the negative binomial point process, $\BN(r, \rho)$, also has a neat form. 
Given that a point $S$ is observed in $\BBr$, the conditional distribution of $\BBr$, denoted by $\BBr_S$, can be written as 
$\BB^{(r)}_S = \wt \BB^{(r+1)} + \delta_S$, where $\wt \BB^{(r+1)}$ is distributed as $\BN(r+1, \rho)$,
 independent of $\BBr$ and $S$. See Prop. 4.3 in \cite{Gregoire1984} and
 Prop. 3.1 in \cite{IpsenMallerShemehsavar2020a}.
 
\begin{proof}[Proof of Theorem \ref{postN}]
We make use of 
the characterisation that  a negative binomial process $\BN(r, \rho, G_0)$ is equal in distribution to a Poisson point process with intensity measure $\Gamma_r \rho \times G_0$, where $\Gamma_r$ is an independent Gamma(r,1) rv. 
Conditional on $\Gamma_r = w$, let $N = \sum_{i\ge 1} \delta_{J_i, Y_i}$ be a Poisson point process  with parameter $w\rho(s)G_0(x)$. 
We will use the  notation 
  $\calP(\rmd N | \Gamma_r\rho \cdot G_0)$  for the law of $N$
 and  $\calP(\rmd N |w\rho \cdot G_0)$ 
 for the law of $N$ conditional on $\Gamma=w$.

The observed data $(\bfn, \bfY)$ are samples from $\MMr$, which can be written with latent variables  
$(N, \bfJ,  \bfY, \Gamma_r)$.
We start with Eq. (23) of \cite{JamesLijoiPrunster2009}
(which we write with our variable $v$ rather than their $u$ and our initial distribution $G_0$ rather than their $H$)
to get the joint distribution of $(N,\bfJ, \bfY,V_n)$ 
as proportional to:
\be\label{jlp23}
\prod_{i=1}^{k}  s_i^{n_i}   e^{-vs_i}\rho(\rmd s_i|Y_i) G_0(\rmd Y_i)v^{n-1} e^{-\psi(v)} \calP_{\nu_v}(\rmd N).
\ee
In our case we take this conditional on $\Gamma_r=w$, 
let $\rho(\rmd s|Y)= w\rho(s)\rmd s$ 
(since we assume the homogeneous case)
and note that, with  $f(s, v) = vs$, 
\ben
 \EE(e^{-N(f)}|w\rho)
 = \exp\Big(\int_{s>0} (1-e^{-f(s,v)}) w\rho(s)\rmd s\Big)
  =e^{-\int_{s>0} (1-e^{-vs}) w\rho(s)\rmd s}
 =e^{-w\psi(v)}
\een
(so we substitute $e^{-w\psi(v)}$ for the $e^{-\psi(v)}$
in \eqref{jlp23}).
Eq. (2) of \cite{LancelotJames2005} gives a formula for a disintegration of a Poisson process $N$  with point $J$ in terms of the Palm distribution:
$N(\rmd J)  \calP(\rmd N |\nu)
=   \calP(\rmd N |\nu,J)\nu(\rmd J)$.
In our case $ \calP_{\nu_v}(\rmd N)$ in \eqref{jlp23} takes the form $ \calP_{\bfJ, \bfY}(\rmd N| we^{-vs}\rho(s) \cdot G_0)$.
Using this  we can write the joint likelihood of 
$\{\bfN_{[n]},\, \bfJ, \bfY, \Gamma_r\}$
as proportional to
\be\label{d}
\prod_{i=1}^{k}  
\big[ s_i^{n_i} e^{-vs_i} w \rho(s_i) \rmd s_i
  G_0(\rmd Y_i)\big]
\times
 \frac{v^{n-1}}{\Gamma(n)}
 e^{-w\psi(v)}
 \times 
 \calP_{\bfJ, \bfY}(\rmd N| we^{-vs}\rho(s) \cdot G_0)
 \times
 \frac{w^{r-1}}{\Gamma(r)}e^{-w}.
\ee
After some rearrangement,  integrate with respect to $\{ \bfJ, N, \Gamma_r\}$, 
to get the joint distribution of $\{\bfN_{[n]}, \bfY\}$ 
as
 \bea\label{eqn1}
&&\Big[\prod_{i=1}^{k}\int_{s_i} s_i^{n_i} e^{-vs_i} \rho(s_i) 
\rmd s_i\Big]
 \times 
 \prod_{i=1}^{k} G_0(dY_i) \times 
\frac{v^{n-1}}{\Gamma(n)} \frac{\Gamma(r+k)}{\Gamma(r) } \cr
&&
\hskip2cm 
\times
\int_{\MMM} \int_{w}\calP_{\bfJ, \bfY}(\rmd N| we^{-vs}\rho(s) \cdot G_0)
\times
\frac{w^{r+k-1}}{\Gamma(r+k)}e^{-w\psi(v)}e^{-w}\rmd w.
\eea
The integral in the second line of \eqref{eqn1} equals
\[
(\psi(v))^{-(r+k)}\int_{\MMM} \calP_{\bfJ, \bfY}(\rmd N| \Gamma_{r+k, \psi(v)}  e^{-vs}\rho(s) \cdot G_0) 
\]
where $\Gamma_{r+k, \psi(v)}$ is an independent  Gamma$(r+k, \psi(v))$ random variable.

Note that a Poisson point process with intensity measure $\Gamma_{r, \beta} \rho$ is in $\BN(r, \rho/\beta)$, 
as we can  see by checking that  
the Laplace functionals are the same; namely, 
	\begin{align*}
  &E\exp\Big(-\int_0^\infty (1-e^{-f(x)})\Gamma_{r, \beta} \rho(x)\rmd x \Big) \\
= &\int_0^\infty \exp\Big(-\int_0^\infty (1-e^{-f(x)})y \rho(x)\rmd x \Big) \beta e^{-\beta y}\frac{(\beta y)^{r-1}}{\Gamma(r)}\rmd y \\
= &\beta^r  \int_0^\infty \exp\big(-y \int_0^\infty (1-e^{- f(x)}) \rho(x)\rmd x + \beta\big) \frac{y^{r-1}}{\Gamma(r)}\rmd y \\
= & \beta^r \Big(\beta +\int_0^\infty (1-e^{- f(x)}) \rho(x)\rmd x\Big)^{-r} 
= \Big(1+\frac{1}{\beta}\int_0^\infty (1-e^{- f(x)}) \rho(x)\rmd x\Big)^{-r},
\end{align*}
with  $f$ any bounded nonnegative measurable function on $\R_+$. 
This is of the form in \eqref{lap_r}.
Thus, we can write 
\be\label{alt2}
\calP(\rmd N | \Gamma_{r+k, \psi(v)}  e^{-vs}\rho(s) \cdot G_0) =  
\calP\Big(\rmd N | \Gamma_{r+k}  \frac{e^{-vs}\rho(s)}{\psi(v)} \cdot G_0\Big).
\ee

 Since the integral over the space of point measures $\MMM$ is $1$, we get from \eqref{eqn1} the joint distribution of $V_n$ and $\bfX$ as 
\be\label{vnXjoint}
\Big[\prod_{i=1}^{k}\pi_{n_i}(v)\Big]
 \times \prod_{i=1}^{k} G_0(\rmd Y_i) \times 
\frac{v^{n-1}}{\Gamma(n)}r^{[k]}\times (\psi(v))^{-(r+k)}
=
 \prod_{i=1}^{k} G_0(\rmd Y_i) \times g_r(v,\bfn),
\ee
where $ g_r(v,\bfn)$ is defined in \eqref{vn}.
Dividing this into \eqref{d} gives the joint density of $\{\bfJ, N, \Gamma_r\}$ conditional on $\{\bfX, V_n\}$ as 
\be\label{joint2}
\frac{(\psi(v))^{r+k}}{[r]^{k}}\prod_{i=1}^{k}\frac{s_i^{n_i}e^{-vs_i}\rho(s_i)}{\pi_{n_i}(v)}
\frac{w^{r+k-1}}{\Gamma(r)} e^{-w(\psi(v))}
\times
\calP_{\bfJ, \bfY}(\rmd N| w  e^{-vs}\rho(s)\cdot G_0).
\ee
Integrate out $w$ from \eqref{joint2} to  obtain the joint density of $\{ \bfJ, N\}$ given $\bfX$ and $V_n$ as 
\[
\prod_{i=1}^{k}\frac{s_i^{n_i}e^{-vs_i}\rho(s_i)}{\pi_{n_i}(v)} 
\times
 \calP_{\bfJ, \bfY}\Big(\rmd N| \Gamma_{r+k} \frac{e^{-vs}\rho(s)}{\psi(v)} \cdot G_0\Big). 
\]
Then note that the Palm measure of a Poisson measure with law $\calP_{\bfJ, \bfY}(\rmd N| w\frac{e^{-vs}\rho(s)}{\psi(v)} \cdot G_0 )$ has the same distribution as
\[
N^{v,w} + \sum_{i=1}^{k} \delta_{J_i^v, Y_i} , \quad \text{where  
$N^{v,w}$ has law $\calP\Big(\rmd N| w\frac{e^{-vs}\rho(s)}{\psi(v)} \cdot G_0\Big)$},
\]
and the $J_i^v$ have the density in \eqref{dis:J}.
From here we can read off the posterior distribution of $\{N, \bfJ\}$ given $\bfX$ and $V_n$. 
We obtain \eqref{postN0} by observing that $\calP\Big(\rmd N| \Gamma_{r+k} \frac{e^{-vs}\rho(s)}{\psi(v)} \cdot G_0\Big)$ is the law of $\BN(r+k, \frac{e^{-vs}\rho(s)}{\psi(v)}, G_0)$.
	\end{proof}
	
	%
%

\begin{proof}[Proof of Theorem \ref{predict}.]
Let $G$ be the species sampling model derived from the negative binomial process as in \eqref{NB0}. 
Let $\bfX$ be an independent sample of size $n$ from $G$, then,
following the proof of  Prop. 2, p.96, of \cite{JamesLijoiPrunster2009}, 
\begin{align*}
\PP(X_{n+1} \in \rmd x | \bfX) 
&=\int \PP(X_{n+1} \in \rmd x | G) \PP(G | \bfX) \nonumber \\
&= \int G(\rmd x) \PP(G | \bfX) 	\nonumber \\
&= \EE( G_{\bfX}(\rmd x) )
= \int_v \EE(G_{\bfX}^{v}(\rmd x)	) \, g_r(v, \bfn) \rmd v	. 	
\end{align*}
Here $G_\bfX$ is the posterior distribution of $G$ given $\bfX$, $G_\bfX^{V_n}$ is specified in Theorem \ref{postU} and $g_r(v, \bfn)$ is defined in \eqref{vn}.
Conditioning on $V_n=v$, by \eqref{pG}
we have
\be\label{comp0}
\EE(G_{\bfX}^{v}(\rmd x)) 
= \EE\Bigg(\frac{\mu^{v}(\rmd x)}{T^v + \sum_{i=1}^{k} J_i^v}\Bigg)
+ \EE\Bigg(\frac{\sum_{i=1}^{k} J_i^v \delta_{Y_i}(\rmd x)}{T^v + \sum_{i=1}^{k} J_i^v}\Bigg):= I_1(v, \rmd x) + I_2(v, \rmd x),
\ee
with $\mu^v$ and  $T^v$ as in Theorem \ref{postU} and $(J_i^v,Y_i)$ as  in Proposition \ref{postN}.
We evaluate each component in \eqref{comp0} separately. 

Write as a shorthand $R :=\Gamma_{r+k, \psi(v)}$. Then by Theorem \ref{postU} and using \eqref{alt2} for the distribution of $\MM^{(r+k, v)}$, we get
\begin{align}\label{comp1a}
I_1(v, \rmd x)= &	\EE\Bigg(\frac{\mu^{v}(\rmd x)}{T^v + \sum_{i=1}^{k} J_i^v}\Bigg) \nonumber \\
=&	\EE  \int_\lambda e^{-\lambda T^v} \int_s s \MM^{(r+k, v)}( \rmd s, \rmd x) e^{-\lambda \sum_{i} J_i^v} \rmd \lambda \nonumber \\
=&   \int_w \int_\MMM \int_\lambda \int_s s  N(\rmd s, \rmd x) \cdot e^{-\lambda T^v}  \calP(\rmd N| w e^{-vs}\rho(s)\cdot G_0) \cdot \EE(e^{-\lambda \sum_{i} J_i^v} ) \rmd \lambda \PP(R \in \rmd w)
\end{align}
Here note that by Eq. (2) of
\cite{LancelotJames2005}, 
\[
e^{-\lambda T^v}  \calP(\rmd N| w e^{-vs}\rho(s)\cdot G_0) = \EE(e^{-\lambda T^v}| w e^{-vs}\rho(s)\cdot G_0)  \calP(\rmd N| w e^{-(\lambda+v)s}\rho(s)\cdot G_0),
\]
where $\EE(e^{-\lambda T^v}| w e^{-vs}\rho(s)\cdot G_0) = \exp(-w\psi^{(v)}(\lambda))$, and 
\bean
\psi^{(v)}(\lambda) &=&
 \int (1-e^{-\lambda x})e^{-vx} \rho(x)\rmd x
  = \int (1-e^{-(\lambda+v) x}) \rho(x)\rmd x
  - \int (1-e^{-v x}) \rho(x)\rmd x \cr
&=&
 \psi(\lambda + v) - \psi(v).
\eean
Also 
\[
\EE(e^{-\lambda \sum_{i} J_i^v}) = \prod_{i=1}^{k} \frac{1}{\pi_{n_i}(v)}\int_{s} e^{-\lambda s} e^{-vs} s^{n_i}\rho(s)\rmd s
 =  \prod_{i=1}^{k} \frac{\pi_{n_i}(\lambda + v)}{\pi_{n_i}(v)}.
\]
Thus, \eqref{comp1a} equals
\begin{align}\label{comp1b}
&\int_w  \int_\lambda \int_s s \int_\MMM  N(\rmd s, \rmd x)  \calP(\rmd N| w e^{-(\lambda+v)s}\rho(s)G_0) \cdot e^{-w(\psi(\lambda+v) - \psi(v))}\cr
&\hskip7cm 
\times
 \prod_{i=1}^{k} \frac{\pi_{n_i}(\lambda + v)}{\pi_{n_i}(v)} \rmd \lambda \PP(R \in \rmd w).
\end{align}
Also note that 
\bean
 \int_s s \int_\MMM  N(\rmd s, \rmd x)  \calP(\rmd N| 
 w e^{-(\lambda+v)s}\rho(s)\cdot G_0) 
 &=&
 G_0(\rmd x) \cdot  w \int_s  e^{-(\lambda+v)s} s \rho(\rmd s)  \cr
 &=&
 w \pi_1(v+\lambda)G_0(\rmd x). 
\eean

Recalling the density $g_r(v, \bfn)$ in \eqref{vn} and that
 $R =\Gamma_{r+k, \psi(v)}$, we see that
\begin{align}\label{comp1c}
&\int_v I_1(v, \rmd x) g_r(v, \bfn) \rmd v 
 =
\int_v \EE\Bigg(\frac{\mu^{v}(\rmd x)}{T^v + \sum_{i=1}^{k} J_i^v}\Bigg)  g_r(v, \bfn)\rmd v\nonumber\\
 &=
 G_0(\rmd x) \int_v\int_\lambda  \prod_{i=1}^{k} \frac{\pi_{n_i}(v+\lambda)}{\pi_{n_i}(v)}
\int_w w \pi_1(v+\lambda) e^{-w(\psi(\lambda + v) - \psi(v))} \frac{w^{r+k-1}}{\Gamma(r+k)}e^{-w\psi(v)} \rmd w \, \rmd \lambda 
 \nonumber \\
& \hspace{3cm} \times \prod_{i=1}^k \pi_{n_i}(v)\frac{v^{n-1}}{\Gamma(n)}r^{[k]}  (\psi(v))^{r+k} (\psi(v))^{-(r+k)} \rmd v.
\end{align}
Collecting the $w$ terms together, this equals
\begin{align}\label{comp1cd}
&
G_0(\rmd x) \int_v\int_\lambda \pi_1(v+\lambda) \prod_{i=1}^{k} \pi_{n_i}(v+\lambda)
\int_w   e^{-w(\psi(\lambda + v) )} \frac{w^{r+k}}{\Gamma(r+k+1)} \rmd w \, \rmd \lambda \cr
&\hskip7cm
\times v^{n-1} \frac{\Gamma(r+k+1)}{\Gamma(r)\Gamma(n)}  \rmd v \nonumber \\ 
& =
G_0(\rmd x) \int_v\int_\lambda \pi_1(v+\lambda) \prod_{i=1}^{k} \pi_{n_i}(v+\lambda)
(\psi(\lambda+v))^{-(r+k)-1}\, \rmd \lambda 
\, v^{n-1} \frac{\Gamma(r+k+1)}{\Gamma(r)\Gamma(n)}  \rmd v.  \cr
&
\end{align}
Changing variable to $z = \lambda+ v$ and recalling \eqref{vn}, we get the RHS of \eqref{comp1cd} equal to
\begin{align*}
&G_0(\rmd x) \int_{z = 0}^\infty \Big( \int_{v = 0}^z v^{n-1}  \rmd v \Big)  \pi_1(z) \prod_{i=1}^{k} \pi_{n_i}(z)
(\psi(z))^{-(r+k)-1}\, 
\,  \frac{\Gamma(r+k+1)}{\Gamma(r)\Gamma(n)} \rmd z \nonumber \\ 
&=
 \frac{G_0(\rmd x)}{n}\int_{z = 0}^\infty \frac{z\pi_1(z)(r+k)}{\psi(z)} \Big\{ \frac{z^{n-1}}{\Gamma(n)} \prod_{i=1}^{k} \pi_{n_i}(z) (\psi(z))^{-(r+k)}r^{[k]}\, \Big\} \rmd z.
\end{align*}
Comparing the last expression to \eqref{prednew} we see that 
\[ 
\int_v I_1(v, \rmd x)g_r(v, \bfn)\rmd v=  \omega_0^{(n)} G_0(\rmd x).
\]

To get \eqref{predold}, we work with $I_2(v, \rmd x)$ in \eqref{comp0} in a similar way:
\begin{align}\label{comp2}
	I_2(v, \rmd x) 
	= &\EE\Bigg(\frac{\sum_{i=1}^{k} J_i^v \delta_{Y_i}(\rmd x)}{T^v + \sum_{i=1}^{K} J_i^v}\Bigg) \nonumber \\
	= & \int_\lambda \EE(e^{-\lambda T^v | \Gamma_{r+k, \psi(v)}e^{-v}\rho}) \EE\Big(e^{-\lambda \sum_{i=1}^k J_i^v} \sum_{i=1}^k J_i^v \delta_{Y_i}(\rmd x)\Big) \rmd \lambda.
\end{align}
Integrating out the $\Gamma_{r+k, \psi(v)}$ in a similar way as for
\eqref{comp1c}, we get  the first expectation 
in \eqref{comp2}
as
\begin{align*}
  & \EE(e^{-\lambda T^v | \Gamma_{r+k, \psi(v)}e^{-v}\rho}) \nonumber \\
= & \int_w \exp\Big(-\int_0^\infty (1-e^{-\lambda y})w e^{-vy}\rho(y)\rmd y\Big)	e^{-w\psi(v)} \frac{w^{r+k-1} (\psi(v))^{r+k}}{\Gamma(r+k)} \rmd w \nonumber \\
= & \Big(\frac{\psi(v)}{\psi(\lambda + v)}\Big)^{r+k}.
\end{align*}
Recalling the density of $(J_i^v)$ in \eqref{dis:J}, the second expectation in \eqref{comp2} can be simplified as 
\begin{align*}
  &	\EE\Big(e^{-\lambda \sum_{i=1}^k J_i^v} \sum_{i=1}^k J_i^v \delta_{Y_i}(\rmd x)\Big) \nonumber \\
= & \EE\Big( \sum_{i=1}^k e^{-\lambda \sum_{j} J_j^v} J_i^v \delta Y_i(\rmd x) \Big) \nonumber \\
= & \sum_{i=1}^k \int_{s_1} \cdots \int_{s_k} \prod_{j=1}^k e^{-\lambda s_j} s_i \delta_{Y_i} (\rmd x) \prod_{j=1}^k\frac{e^{-vs_j}s_j^{n_j} \rho(s_j)}{\pi_{n_j} (v)} \rmd s_1 \cdots \rmd s_k \nonumber \\
= & \sum_{i=1}^k \int_{s_1} \cdots \int_{s_k} \Big( \prod_{j=1}^k\frac{e^{-(v+ \lambda)s_j}s_j^{n_j} \rho(s_j)}{\pi_{n_j} (v)} \Big) s_i \delta_{Y_i} (\rmd x) \, \rmd s_1 \cdots \rmd s_k \nonumber \\
= & \sum_{i=1}^k \prod_{j\neq i}\frac{\pi_{n_j}(v+\lambda)}{\pi_{n_j} (v)}   \int_{s_i}    s_i \delta_{Y_i}  (\rmd x) \,  \frac{e^{-(v+\lambda)s_i} s_i^{n_i} \rho(s_i)}{\pi_{n_i}}\rmd s_i \nonumber \\
= & \sum_{i=1}^k \prod_{j\neq i}\frac{\pi_{n_j}(v+\lambda)}{\pi_{n_j} (v)}   \frac{\pi_{n_i+1}(\lambda +v)}{\pi_{n_i (v)}}\, \delta_{Y_i}  (\rmd x) \, \nonumber \\
= & \prod_{j=1}^k \frac{1}{\pi_{n_j} (v)}    \sum_{i=1}^k \Big( \pi_{n_i+1}(\lambda +v) \prod_{j\neq i}\pi_{n_j}(v+\lambda)  \Big) \, \delta_{Y_i}  (\rmd x).
\end{align*}
Putting these together and recalling $g_r(v, \bfn)$ in \eqref{vn}, we get 
\begin{align}\label{comp2c}
&\int_v I_2(v, \rmd x) g_r(v, \bfn)\rmd v\nonumber \\
&=
 \int_v\int_\lambda  \Big(\frac{\psi(v)}{\psi(\lambda + v)}\Big)^{r+k}
\prod_{j=1}^k \frac{1}{\pi_{n_j} (v)}    \sum_{i=1}^k \Big( \pi_{n_i+1}(\lambda +v) \prod_{j\neq i}\pi_{n_j}(v+\lambda)  \Big) \, \delta_{Y_i}  (\rmd x)  \rmd \lambda \nonumber \\
& \hspace{2in} \times \prod_{i=1}^k \pi_{n_i}(v)\frac{v^{n-1}}{\Gamma(n)}r^{[k]}   (\psi(v))^{-(r+k)} \rmd v \nonumber \\
&=
 \int_v\int_\lambda  
  \sum_{i=1}^k \Big( \pi_{n_i+1}(\lambda +v) \prod_{j\neq i}\pi_{n_j}(v+\lambda)  \Big) \, \delta_{Y_i}  (\rmd x)  \rmd \lambda   \times \frac{v^{n-1}}{\Gamma(n)}r^{[k]}  
  (\psi(\lambda + v))^{-(r+k)} \rmd v \nonumber \\
&=
 \int_{z=0}^\infty \Big( \int_{v= 0}^z v^{n-1} \rmd v  \Big)
  \sum_{i=1}^k \Big( \pi_{n_i+1}(z) \prod_{j\neq i}\pi_{n_j}(z)  \Big) \, \delta_{Y_i}  (\rmd x)  \times \frac{1}{\Gamma(n)}r^{[k]}  (\psi(z))^{-(r+k)} \rmd z  \nonumber \\
&= 
\int_{z=0}^\infty \frac{z}{n} \Big( \prod_{j=1}^k \pi_{n_j}(z)\frac{z^{n-1}}{\Gamma(n)}r^{[k]}  (\psi(z))^{-(r+k)}\Big)
   \sum_{i=1}^k  \Big( \frac{\pi_{n_i+1}(z)}{\pi_{n_i}(z)}   \Big) \, \delta_{Y_i}  (\rmd x)  \rmd z \nonumber \\
&=
\frac{1}{n} \sum_{i=1}^k  \int_z z     \frac{\pi_{n_i+1}(z)}{\pi_{n_i}(z)}   g_r(z, \bfn)\rmd z\,
 \delta_{Y_i}(\rmd x).
\end{align}
We recognise \eqref{comp2c} as the second term on the RHS of \eqref{pred0}. This completes the proof.	
\end{proof}

	 	 \begin{proof}[Proof of Proposition \ref{3.1}]
Using \eqref{vn} the RHS of \eqref{nune} is
\bean
&&
\int_0^\infty v
\Big( \frac{(r+k)}{n}
\frac{\pi_1(v)}{\psi(v)} 
 + \frac{1}{n} \sum_{i = 1}^k \frac{\pi_{n_i+1}(v)}{\pi_{n_i}(v)} \Big)
 g_r(v,\bfn) \rmd v\cr
 &&=
 \frac{r^{[k]}}{ n\Gamma(n)}
 \int_0^\infty \Big(
\frac{ (r+k)\pi_1(v)}{\psi(v)} 
 + \sum_{i = 1}^k \frac{\pi_{n_i+1}(v)}{\pi_{n_i}(v)}\Big)
\frac{v^{n} \prod_{i=1}^{k}\pi_{n_i}(v)}
{(\psi(v))^{r+k}}.
	 \eean
Leaving aside the factor of $r^{[k]}/n\Gamma(n)$, the integral term equals
	 \bean
	 &&
 \int_0^\infty \Big(
\frac{ (r+k)\rmd \psi(v)/\rmd v}{\psi(v)} 
 + \sum_{i = 1}^k \frac{1}{  \pi_{n_i}(v) }
 \big( -\frac{\rmd \pi_{n_i}}{\rmd v} \Big)
\frac{v^{n} \prod_{i=1}^{k}\pi_{n_i}(v)}{(\psi(v))^{r+k}}
\rmd v	\cr
	 &&
	 =
 \int_0^\infty \Big(
 (r+k)\frac{\rmd \log \psi(v)}{\rmd v}
 - \sum_{i = 1}^k \frac{\rmd\log \pi_{n_i}} {\rmd v}
 \Big)
\frac{v^{n} \prod_{i=1}^{k}\pi_{n_i}(v)}{(\psi(v))^{r+k}}
\rmd v	\cr
	 &&
	 =
 \int_0^\infty 
\frac{\rmd} {\rmd v}  \log\Big(
 \frac{(\psi(v))^{r+k}} { \prod_{i=1}^{k}\pi_{n_i}(v)}
 \Big)
\frac{v^{n}\prod_{i=1}^{k}\pi_{n_i}(v)}{(\psi(v))^{r+k}} 
\rmd v.
\eean
Reintroducing the factor of $r^{[k]}/n\Gamma(n)$
and defining 
\ben 
f(v) :=   \log\Big(
 \frac{(\psi(v))^{r+k}} { \prod_{i=1}^{k}\pi_{n_i}(v)}
 \Big),
 \een
the last expression can be written as
\ben
 \frac{r^{[k]}}{ n\Gamma(n)}
 \int_0^\infty \frac{\rmd f(v)} {\rmd v} v^n
 e^{-f(v)} \rmd v
 =
  \frac{r^{[k]}}{ \Gamma(n)}
 \int_0^\infty  v^{n-1} e^{-f(v)} \rmd v,
\een
with an  integration by parts at the last step. Substituting for $f(v)$, the RHS is seen to equal $ \int_0^\infty g_r(v,\bfn) \rmd v$.

(ii)\ Using \eqref{vn} and \eqref{vn5} we can write
\ben
\frac{n!}{\prod_{j=1}^{n}j!^{m_j} m_j!}
 g_r(v,\bfm)
 =
  \frac{n!r^{[k]}v^{n-1}}{\Gamma(n)(\psi(v))^{r+k}}
   \prod_{j=1}^{n}
\frac{(\pi_{j}(v))^{m_j}}{j!^{m_j} m_j!}.
\een
Let $f(x)=x^{-r}$, so that 
$f^{(k)}(x)= r^{[k]} (-1)^k x^{-r-k}$, and let
$\phi(v)=\psi(v)$.
Then $\pi_{j}(v)= (-1)^{j-1}\phi^{(j)}(v)$ and 
\bean
&&
\sum_{k=1}^n \sum_{\bfm\in A_{nk}}
\frac{n! }{\prod_{j=1}^{n}j!^{m_j} m_j!}
  g_r(v,\bfm)\cr
 &&=
   \frac{(-1)^n v^{n-1}}{\Gamma(n)}
     \sum_{k=1}^n
    \sum_{\bfm\in A_{nk}}
    n!f^{(k)}(\phi(v))
    \prod_{j=1}^{n} \frac{1}{m_j!}
    \Big(\frac{ (-1)^{j}\phi^{(j)}(v)}{j!}    \Big)^{m_j}.
\eean
Recall from $A_{nk}$ in \eqref{4.4a0} that $\sum_{j=1}^njm_j=n$ and $ \sum_{j=1}^nm_j=k$.
Then, after cancelling the factors of $-1$,  we recognise in the sum and product Fa\`a di Bruno's formula for 
$f^{(n)}(\phi(v))$, so the integral over $v$ of the expression equals
\ben
  \frac{1}{\Gamma(n)}
 \int_0^\infty v^{n-1} f^{(n)}(\phi(v))\rmd v,
 \een
 and this is seen to equal 1  after $n-1$ integrations by parts.
Hence  \eqref{nunf}.
	 \end{proof}

%


{}


\end{document}